\newtheorem{theorem}{Theorem}
\newtheorem{proposition}{Proposition}
\newtheorem{example}{Example}
\newtheorem{definition}{Definition}
\newcommand{\IC}[2]{\ensuremath{IC_{\scriptscriptstyle #1}\left(#2\right)}}
\newcommand{\RT}{\ensuremath{\mathfrak{T}}}
\newcommand{\Rtens}[2]{\ensuremath{\RT({#1};{#2})}}
\newcommand{\CIwyn}{\ensuremath{CI_\text{\sf Wyn}}}
\newcommand{\CIgk}{\ensuremath{CI_\text{GK}}}
\newcommand{\Twyn}{\ensuremath{T_\text{\sf Wyn}}}
\newcommand{\Tgk}{\ensuremath{T_\text{GK}}}
\def\footnoterule{\relax%
  \kern-5pt
  \hbox to \columnwidth{\hfill\vrule width 1.0\columnwidth height 0.4pt\hfill}
  \kern2pt}
\begin{document}

\title{Lower Bounds for Interactive Function Computation via Wyner Common Information}

\author{

\IEEEauthorblockN{Shijin Rajakrishnan\textsuperscript{1}}
\IEEEauthorblockA{IIT Madras, Chennai\\
Email: ee12b128@ee.iitm.ac.in}

\and
\IEEEauthorblockN{Sundara Rajan S\textsuperscript{1}}
\IEEEauthorblockA{IIT Madras, Chennai\\
Email: ee11b130@ee.iitm.ac.in}

\and
\IEEEauthorblockN{Vinod Prabhakaran}
\IEEEauthorblockA{
TIFR, Mumbai\\
Email: vinodmp@tifr.res.in}
}
\maketitle

\begin{abstract}
The question of how much communication is required between collaborating parties to compute a function of their data is of fundamental importance in the fields of theoretical computer science and information theory. In this work, the focus is on coming up with lower bounds on this. The information cost of a protocol is the amount of information the protocol reveals to Alice and Bob about each others inputs, and the information complexity of a function is the infimum of information costs over all valid protocols. For the amortized case, it is known that the optimal rate for the computation is equal to the information complexity. Exactly computing this information complexity is not straight forward however. In this work we lower bound information complexity for independent inputs in terms of the Wyner common information of a certain pair of random variables. We show a structural property for the optimal auxiliary random variable of Wyner common information and exploit this to exactly compute the Wyner common information in certain cases. The lower bound  obtained through this technique is shown to be tight for a non-trivial example - equality (EQ) for the ternary alphabet. We also give an example to show that the lower bound may, in general, not be tight.

\end{abstract}

\IEEEpeerreviewmaketitle
\footnotetext[1]{{Authors contributed equally.}}
\section{Introduction}
The amount of communication required by two parties to compute a function of their data is a central question in theoretical computer science and also information theory. Since the seminal work of Yao~\cite{Yao79}, much progress has been made on understanding {\em communication complexity} in computer science literature. While early progress was based on combinatorial techniques~\cite{KushilevitzNi97}, more recently advances in the area have centered around the notion of information
complexity, which measures the {\em amount of information} learned by the parties about each other's inputs from a protocol's transcript, rather than the {\em number of bits} in a
protocol's transcript, if it should compute a function (somewhat) correctly.
Specifically, if the inputs $X,Y$ of the parties come from a distribution $\mu$, the {\em information cost} of a protocol (for computing) $\Pi$ whose transcript is denoted by $M$ is defined as
\[  I(X;M|Y) + I(Y;M|X).\]
{\em Information complexity} is the infimum of information costs of valid protocols, i.e., protocols which allow the parties to compute within the desired error performance, and is denoted by $IC_{XY}(Z)$ for the computation of a function $Z=f(X,Y)$.
 
This quantity has a close connection to the problem of interactive source coding and interactive function computation studied in information theory literature. In particular, works by Kaspi~\cite{Kaspi85} and Ma and Ishwar~\cite{MaIs13} show that information complexity for zero-error is precisely the rate of communication required to compute with asymptotically vanishing error when the parties are allowed to code over long blocks of independent, identically distributed inputs. While, in general, computing information complexity is not straightforward, it is known exactly for some interesting examples~\cite{MaIs13} and an algorithm, albeit with run-time exponential in the alphabet size, for approximating it has been proposed~\cite{BravermanSc15}.

In~\cite{PrabhakaranPr14arxiv}, with the goal of better understanding information complexity, a monotonicity property of interactive protocols was leveraged to obtain lower bounds on the information complexity. The monotonicity property is that of the ``tension region'' of the views of the two users. Tension region of a pair of random variables was introduced in~\cite{PrabhakaranPr14} as a measure of dependence which cannot be captured using a common random variable. The question of how
well correlation {\em can } be captured by a random variable may be formulated
in terms of ``common information.'' Two different notions of common
information were developed in the 70's, $\CIgk(A;B)$ by G\'acs-K\"orner
\cite{GacsKo73}, and $\CIwyn(A;B)$ by Wyner \cite{Wyner75}.
\begin{align}
\CIgk(A;B)&=\max_{\substack{p_{Q|A,B}:\\Q-A-B\\Q-B-A}} I(Q;A,B)\\
\CIwyn(A;B)&=\min_{\substack{p_{Q|A,B}:\\A-Q-B}} I(Q;A,B) \label{eq:WynerCI}
\end{align} One can define
corresponding notions of tension as the gap between mutual information
(which accounts for all the correlation, but may not correspond to a common
random variable) and common information.  More precisely, one can define the
non-negative tension quantities $\Tgk(A;B)=I(A;B)-\CIgk(A;B)$ and
$\Twyn (A;B)=\CIwyn(A;B)-I(A;B)$.  These notions of tension were identified in
\cite{PrabhakaranPr14} as special cases of a unified 3-dimensional notion of {\em
tension region}. 
 
The tension region of a pair of random variables was defined in~\cite{PrabhakaranPr14} as the following upward closed region.
\begin{definition}
For a pair of random variables $A,B$, their {\em tension region}  \Rtens AB is defined as 
\begin{align*}
&\Rtens AB = \{ (r_1,r_2,r_3):\; \exists Q \text{ jointly distr. with } A,B \\
          &\text{ s.t. } I(B;Q|A) \le r_1, I(A;Q|B) \le r_2, I(A;B|Q) \le r_3 \}.
\end{align*}

\end{definition}
As shown in~\cite{PrabhakaranPr14}, without loss of generality, we may assume a cardinality bound $|\mathcal{Q}|\leq|\mathcal{A}||\mathcal{B}|+2$ on the alphabet $\mathcal{Q}$ in the above definition, where $\mathcal{A}$ and $\mathcal{B}$ are the alphabets of $A$ and $B$, respectively.

In \cite{PrabhakaranPr14}, an operational meaning was also obtained
for tension region in terms of a generalization of the common information problem of G\'acs and K\"orner. Tension region has proved useful in deriving converse results for secure computation. Specifically, it was used to strictly improve upon an upper bound of Ahlswede and Csisz\'ar~\cite{AhlswedeCs13} on the oblivious transfer capacity of channels~\cite{RaoPr14}.

Suppose $X$,$Y$ are the inputs and $A$,$B$ the outputs of the parties under a protocol. Let $M$ denote the transcript of the protocol. Let $V_A=(X,A,M)$ and $V_B=(Y,B,M)$ denote the views of the parties at the end of the protocol. The key monotonicity property we use is:
\begin{proposition}[Theorem~5.4 of \cite{PrabhakaranPr14}]\label{prop:monotonicity}
\[\Rtens {V_A}{V_B} \supseteq \Rtens XY.\]
\end{proposition}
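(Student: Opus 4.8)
The plan is to prove the inclusion by induction on the rounds of the protocol, showing that the tension region can only grow as the views are built up, first from the transcript messages and finally from the outputs. The base case is immediate: before any communication and before any output is produced, the views are exactly $V_A^{(0)}=X$ and $V_B^{(0)}=Y$, so $\Rtens {V_A^{(0)}}{V_B^{(0)}}=\Rtens XY$. It then suffices to establish two ``elementary step'' claims and compose them: one for appending a transcript message, which becomes common to both views, and one for appending an output, which is local to a single view. In each case I will take a witness for the current region and exhibit a witness for the augmented region achieving coordinatewise no-larger values; upward-closedness of $\RT$ then yields the desired containment.

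For the \emph{common-message step}, suppose the current views are $U$ (Alice) and $W$ (Bob), and Alice transmits $M'$, a randomized function of her own view, so that the protocol's factorization gives $M'-U-(W,\ldots)$; the new views are $U'=(U,M')$ and $W'=(W,M')$. Fix any $(r_1,r_2,r_3)\in\Rtens UW$ witnessed by an auxiliary $Q$ with $I(W;Q|U)\le r_1$, $I(U;Q|W)\le r_2$, $I(U;W|Q)\le r_3$, extending the probability space so that $Q$ is conditionally independent of the fresh protocol randomness given $(U,W)$; the joint law is then $p(u,w)\,p(q|u,w)\,p(m'|u)$. The crucial choice is to take the new auxiliary variable to be $Q'=(Q,M')$ rather than $Q$ itself: since $M'$ is now common knowledge it must be absorbed into $Q'$, for otherwise the residual correlation $I(U';W'|Q)$ would pick up the spurious term $H(M'|Q)$. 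With $Q'=(Q,M')$, conditioning on $Q'$ renders $M'$ known, and using the Markov relations $M'-(U,Q)-W$ and $M'-(U,W)-Q$ that follow from the joint law above, a short chain-rule computation shows every coordinate is in fact non-increasing: $I(W';Q'|U')=I(W;Q|U)\le r_1$, $I(U';Q'|W')=I(U;Q|W)-I(M';Q|W)\le r_2$, and $I(U';W'|Q')=I(U;W|Q)-I(M';W|Q)\le r_3$. Hence $(r_1,r_2,r_3)\in\Rtens {U'}{W'}$.

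For the \emph{output step}, Alice appends $A$, a local function of her view, to $U$ only, leaving $W$ unchanged and keeping the same auxiliary variable $Q'=Q$. Here the Markov chain $A-U-(W,Q)$, again a consequence of locality, gives $A-(U,Q)-W$ and $A-(U,W)-Q$, from which $I(W;Q|U,A)=I(W;Q|U)$, $I(U,A;Q|W)=I(U;Q|W)$, and $I(U,A;W|Q)=I(U;W|Q)$. Thus all three constraints are preserved exactly, so the region again does not shrink; the output appended to Bob's view is handled symmetrically.

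Composing these steps along the protocol---append $M_1,\dots,M_k$ in order, then append $A$ and $B$---carries $\Rtens XY$ into $\Rtens {V_A}{V_B}$, the final witness being $(Q,M)$, and upward-closedness of the tension region then gives $\Rtens {V_A}{V_B}\supseteq\Rtens XY$. I expect the main obstacle to be the bookkeeping of the conditional-independence relations rather than any single hard inequality: one must fix the coupling so that $Q$ stays independent of the fresh protocol randomness given the inputs, and then propagate the Markov structure correctly through the induction, verifying at step $i$ that the newly sent message is independent of $(W,Q)$ given the sender's current view. The one genuinely non-mechanical idea is recognizing that a \emph{shared} message must be folded into the auxiliary variable while a \emph{private} output must not be; getting this distinction right is what makes all three coordinates behave monotonically.
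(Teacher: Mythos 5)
Your proof is correct; note that the paper itself states Proposition~\ref{prop:monotonicity} without proof, importing it as Theorem~5.4 of \cite{PrabhakaranPr14}, and your round-by-round induction---absorbing each public message into the auxiliary variable via $Q'=(Q,M')$ while keeping $Q$ unchanged for private outputs, with the coupling that makes $Q$ conditionally independent of the fresh protocol randomness given the inputs---is essentially the argument used in that reference. Your three coordinate computations (equality in the first, decreases by $I(M';Q|W)$ and $I(M';W|Q)$ in the second and third) all verify under the Markov relations $M'-U-(W,Q)$ and $A-U-(W,Q)$, so there is nothing to correct.
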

A consequence of this is the following result:
\begin{theorem}
\label{thm:tens-ic}
For all $X,Y,Z$, 
\begin{align*}
 \IC{XY}{Z} &\ge \Twyn(XZ;YZ) - \Twyn(X;Y) \\&\qquad\qquad + I(X;Z|Y) + I(Y;Z|X).
\end{align*}

\end{theorem}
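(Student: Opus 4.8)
The plan is to lower bound the information cost $I(X;M\mid Y)+I(Y;M\mid X)$ of an arbitrary valid protocol with transcript $M$ and then take the infimum. Writing $A=(X,Z)$ and $B=(Y,Z)$, I would first carry out a bookkeeping decomposition. Since $Z$ is a deterministic function of $(X,Y)$ and a valid protocol lets each party recover $Z$ from its own view (so $H(Z\mid Y,M)=H(Z\mid X,M)=0$), the chain rule gives $I(X;M\mid Y)=I(A;M\mid B)+I(X;Z\mid Y)$, and symmetrically $I(Y;M\mid X)=I(B;M\mid A)+I(Y;Z\mid X)$. Hence the information cost equals $I(A;M\mid B)+I(B;M\mid A)+I(X;Z\mid Y)+I(Y;Z\mid X)$, whose last two terms already match the claimed bound. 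What remains is to show $I(A;M\mid B)+I(B;M\mid A)\ge \Twyn(A;B)-\Twyn(X;Y)$, noting $A=XZ$, $B=YZ$.

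Next I would record a \emph{slice} characterization of Wyner tension: for any pair $U,V$, $\Twyn(U;V)=\min\{r_1+r_2:(r_1,r_2,0)\in\Rtens{U}{V}\}$. A point with $r_3=0$ forces the Markov chain $U-Q-V$, under which the identity $I(Q;U,V)=I(U;Q\mid V)+I(V;Q\mid U)+I(U;V)$ holds; so minimizing $r_1+r_2=I(U;Q\mid V)+I(V;Q\mid U)$ over such $Q$ is exactly $\CIwyn(U;V)-I(U;V)$.

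The crux is a general inequality: $\CIwyn((A,M);(B,M))\ge\CIwyn(A;B)+H(M\mid A,B)$. The key observation is that any auxiliary $Q$ achieving $(A,M)-Q-(B,M)$ must make $M$ a deterministic function of $Q$, since conditional independence of $(A,M)$ and $(B,M)$ given $Q$, applied to the common coordinate $M$, forces $H(M\mid Q)=0$. Then $I(Q;A,B,M)=I(Q;A,B)+H(M\mid A,B)$, and because $A-Q-B$ also holds, $I(Q;A,B)\ge\CIwyn(A;B)$. Combined with the elementary identity $I((A,M);(B,M))-I(A;B)=I(A;M\mid B)+I(B;M\mid A)+H(M\mid A,B)$, this rearranges precisely to $I(A;M\mid B)+I(B;M\mid A)\ge\Twyn(A;B)-\Twyn\big((A,M);(B,M)\big)$, where $(A,M)=V_A$ and $(B,M)=V_B$ are the two views.

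Finally I invoke monotonicity (Proposition~\ref{prop:monotonicity}): since $\Rtens{V_A}{V_B}\supseteq\Rtens{X}{Y}$, and the slice functional $\min\{r_1+r_2:(r_1,r_2,0)\in\cdot\}$ is antitone under region inclusion, the slice characterization yields $\Twyn(V_A;V_B)\le\Twyn(X;Y)$. Substituting this into the previous display and restoring the conditional-information terms gives the theorem. I expect the main obstacle to be the crux lemma of the third step --- in particular, justifying that the common coordinate $M$ is frozen by every optimal auxiliary, and verifying that this is exactly what makes the bound hold for correlated inputs. For independent inputs one has $\Twyn(X;Y)=0$ and, by the rectangle property, $I(A;B\mid M)=0$, so the slice argument applies to $(A;B)$ directly; the general case genuinely needs the monotonicity step to absorb the $\Twyn(X;Y)$ term.
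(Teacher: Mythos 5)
Your proof is correct, and it actually establishes more than the paper's own proof does: the appendix proves Theorem~\ref{thm:tens-ic} only for independent $X,Y$ (where $\Twyn(X;Y)=0$), deferring the general statement to~\cite{PrabhakaranPr14arxiv}, whereas your argument handles correlated inputs. The opening bookkeeping is shared: using $H(Z|XM)=H(Z|YM)=0$, the paper's steps $(a)$--$(b)$ give exactly your decomposition $I(X;M|Y)+I(Y;M|X)=I(XZ;M|YZ)+I(YZ;M|XZ)+I(X;Z|Y)+I(Y;Z|X)$. The key step then diverges. The paper never invokes Proposition~\ref{prop:monotonicity}: for independent inputs it uses the rectangle property~(\ref{eq:infoineq}) to conclude $I(XZ;YZ|M)=0$, so the transcript $M$ itself is a feasible auxiliary in the variational formula for $\Twyn(XZ;YZ)$, and $I(XZ;M|YZ)+I(YZ;M|XZ)\ge\Twyn(XZ;YZ)$ follows in one line --- a substitution that is unavailable once $I(X;Y)>0$. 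Your route instead passes through the views $V_A=(XZ,M)$, $V_B=(YZ,M)$: your crux lemma is sound, since conditional independence of the views given $Q$ forces $H(M|Q)=I(M;M|Q)\le I(V_A;V_B|Q)=0$, whence $\CIwyn(V_A;V_B)\ge\CIwyn(XZ;YZ)+H(M|XZ,YZ)$; your identity $I(V_A;V_B)-I(XZ;YZ)=I(XZ;M|YZ)+I(YZ;M|XZ)+H(M|XZ,YZ)$ checks out (both sides equal $H(M|XZ)+I(XZ;M|YZ)$), giving $I(XZ;M|YZ)+I(YZ;M|XZ)\ge\Twyn(XZ;YZ)-\Twyn(V_A;V_B)$; and Proposition~\ref{prop:monotonicity}, combined with your correct slice characterization $\Twyn(U;V)=\min\{r_1+r_2:(r_1,r_2,0)\in\Rtens UV\}$ (the $r_3=0$ slice forces $U-Q-V$, under which $I(U;Q|V)+I(V;Q|U)=I(Q;U,V)-I(U;V)$), absorbs the remainder via $\Twyn(V_A;V_B)\le\Twyn(X;Y)$. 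This is essentially the tension-region argument of the cited arXiv reference, reconstructed correctly; what it buys over the paper's appendix is the full theorem for dependent inputs, at the cost of the extra common-part lemma. Two pedantic notes: write $\inf$ rather than $\min$ unless you invoke the cardinality bound $|\mathcal{Q}|\le|\mathcal{A}||\mathcal{B}|+2$ to make the region closed, and note that your first decomposition presumes the same zero-error validity condition $H(Z|XM)=H(Z|YM)=0$ that the paper itself assumes.
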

See~\cite{PrabhakaranPr14arxiv} for a more general result which implies the above lower bound. For the case of independent inputs the $\Twyn(X;Y)$ term goes to zero. We will give a proof of Theorem 1 for the case of independent inputs in Appendix A. While, the above bound is not always tight\footnote[2]{An example where this bound turns out not to be tight is that of computing the AND of two independent uniform bits $X,Y$, for which information complexity is known~\cite{MaIs13}.}, we present a non-trivial example where the bound turns out to give a tight result. It is worth noting that the technique of~\cite{MaIs13} does not easily yield this result.
\begin{example}
[ternary EQ]
Let $X,Y$ be independent and uniformly distributed over $\{0,1,2\}$. The goal is to compute the indicator for the event $(X=Y)$. Theorem~\ref{thm:tens-ic} gives a lower bound of $H_2\left(\frac{2}{3}\right)+\log_2(3)$ which can be shown to be tight.
\end{example}

{The equality (EQ) function, which determines whether two parties have the same inputs, has been studied extensively. To the best of our knowledge, the only lower bound on information complexity available is the trivial $IC_{XY}(Z)\geq I(X;Z|Y)+I(Y;Z|X)$. The best available upper bound is $4.5$ for $k$-ary EQ computation, for any probability distribution over the inputs} \cite{braverman2012interactive}. In this paper, we obtain both lower bounds and upper bounds on the information complexity of the EQ function for uniformly distributed inputs. 
To evaluate our lower bound of Theorem \ref{thm:tens-ic}, we need to compute Wyner common information  (or an equivalent quantity given in (\ref{eq:ICcomp})). Note that computing Wyner common information is, in general, not straightforward~\cite{Witsenhausen76}. Using standard techniques based on Carath\'eodory's theorem, an upper bound of $|{\mathcal Q}| \leq |{\mathcal A}|\times|{\mathcal B}|+2$ on the auxiliary random variable $Q$ of \eqref{eq:WynerCI} is available. We show that it is enough to consider a potentially smaller cardinality for ${\mathcal Q}$ which depends on the number of maximal cliques of the bipartite characteristic graph of $p_{A,B}$ -- this is the bipartite graph on ${\mathcal A}\times{\mathcal B}$ such that there is an edge between $a\in{\mathcal A}$ and $b\in{\mathcal B}$ if $p_{A,B}(a,b)>0$ -- such that conditioned on each element of $q\in {\mathcal Q}$, the characteristic graph of $p_{A,B|Q=q}$ is a distinct clique (Theorem~\ref{th:main}). This then allows us to compute Wyner common information exactly for certain examples of interest (Section~\ref{sec:examples}). In particular, the resulting lower bound turns out to be tight for the ternary EQ example above. We also give a randomized protocol for the 4-ary EQ problem which performs better than deterministic protocols in terms of its information cost, but here our lower bound does not meet the upper bound given by the protocol.

\section{Problem Formulation}
Alice and Bob get inputs $X$ and $Y$ respectively from a joint distribution $p_{XY}(x,y)$, their common objective being the computation of a function $Z=f(X,Y)$. They are connected by a channel which makes no errors in transmission. The \textit{protocol} to compute the function proceeds in a sequential manner as follows: initially Alice (or Bob) sends a message on the link, say $M_1$. Bob waits for the message to reach him and then sends a message $M_2$ on the link. The procedure iterates long enough for Alice and Bob to compute the function $Z$.
\begin{figure}[ht]
\begin{center}
\includegraphics[scale=0.25]{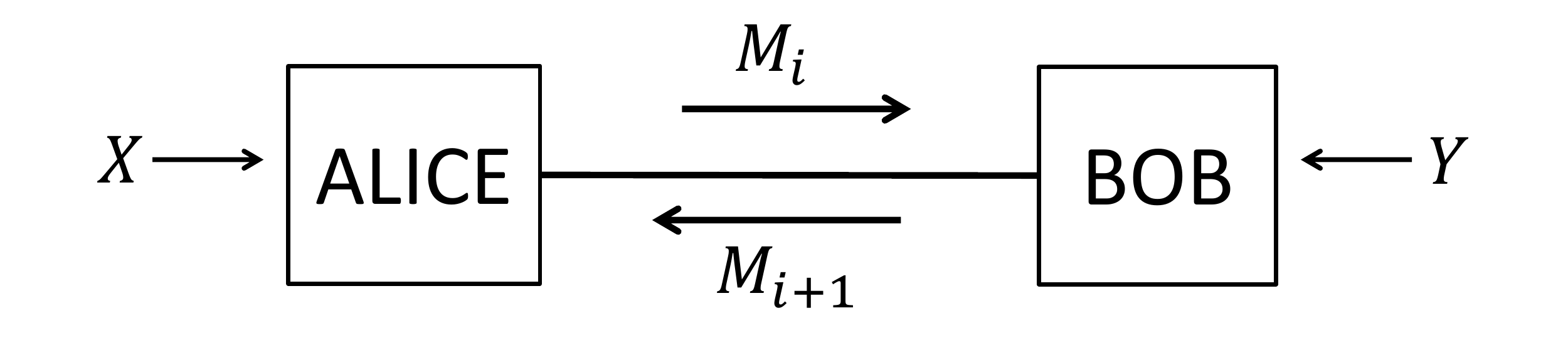}
\caption{The model for the two-party computation}
\label{fig:model}
\end{center}
\end{figure}
\\ Let $M^i=(M_1,M_2,...,M_i)$ denote the \textit{transcript} on the link till the $i^{th}$ stage and $M=(M_1,M_2,...)$ denote the final transcript at the end of the protocol. It is easy to see that the following two conditions are satisfied by any  protocol beginning at the Alice end: 
\begin{align}
&M_i - XM^{i-1} - Y && \forall \quad \mathrm{Odd} \quad i \label{eq:Alice-Markov}\\
&M_i - YM^{i-1} - X && \forall \quad \mathrm{Even}\quad  i \label{eq:Bob-Markov}
\end{align}

The entropy of the final transcript $H(M)$ is a lower bound for the average number of bits needed for the protocol. Further the \textit{information complexity} is a lower bound on $H(M)$. To prove this, we first prove the following inequality.\\ \indent Let $i$ be odd. Now,
\begin{align}
H(Y|M^{i-1})&\geq H(Y|M^i) \nonumber \\
 H(Y|M^{i-1})-H(Y|XM^{i-1})&\overset{(a)}{\geq} H(Y|M^i)-H(Y|XM^i)\nonumber \\
I(X;Y|M^{i-1})&\geq I(X;Y|M^{i}) 
\end{align}
where $(a)$ is due to $H(Y|XM^{i-1})=H(Y|XM^i)$, as $M_i - XM^{i-1} - Y$. The same inequality can be obtained for the case when $i$ is even, with a similar argument. A consequence of this is 
\begin{equation}
I(X;Y) \geq I(X;Y|M).
\label{eq:infoineq}
\end{equation}
\indent Now,

\begin{align}
H(M) &\geq I(M;XY)= I(X;M) + I(Y;M|X)\nonumber \\ &=I(X;YM)-I(X;Y|M)+I(Y;M|X)\nonumber\\&=I(X;Y)+I(X;M|Y)-I(X;Y|M)+I(Y;M|Y)\nonumber\\
&\overset{(a)}{\geq} I(X;M|Y) + I(Y;M|X) \geq IC_{XY}(Z)
\label{eq:HMGenLB}
\end{align}
where $(a)$ is due to (\ref{eq:infoineq}).\\
\indent Now, in the \textit{amortized} case, when we consider a block of independent identically distributed inputs of length $n$ and a sequence of schemes, one for each block length $n$, the following theorem, proved in \cite{MaIs13,el2011network}, gives the minimum rate of communication needed to compute a function with a vanishing probability of block error. The rate $R$ of a scheme is defined as the total number of bits exchanged divided by the block length. A rate $R$ is said to be {\em achievable} if there is a sequence of schemes whose probability of error goes to $0$ as $n \rightarrow \infty$. The optimal rate $R^*$ is the infimum of all achievable rates. 

\begin{theorem}
The optimal amortized rate $R^*$ for computing the function $Z=f(X,Y)$ is

\begin{equation}
R^* = \underset{M}{\inf}\left[I(X;M|Y) + I(Y;M|X)\right]=IC_{XY}(Z)
\label{eq:optim}
\end{equation}
where the infimum is over all $M=(M_1,M_2,...)$ satisfying the Markov chain conditions in (\ref{eq:Alice-Markov}) and (\ref{eq:Bob-Markov}),
and $H(Z|YM)=H(Z|XM)=0$.
\end{theorem}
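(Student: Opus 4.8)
The plan is to prove the two inequalities $R^*\ge IC_{XY}(Z)$ and $R^*\le IC_{XY}(Z)$ separately, the first by a converse (single-letterization) argument and the second by achievability (random coding). Throughout, a scheme of block length $n$ produces a transcript $M^{(n)}$ on $n$ i.i.d. input pairs $(X^n,Y^n)$, and since the inputs are i.i.d. this block transcript itself satisfies the round-wise Markov conditions \eqref{eq:Alice-Markov}--\eqref{eq:Bob-Markov} with $(X,Y)$ replaced by $(X^n,Y^n)$.

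\textbf{Converse.} First I would run the chain of inequalities culminating in \eqref{eq:HMGenLB} verbatim at the block level, so that the number of exchanged bits is at least $H(M^{(n)})\ge I(X^n;M^{(n)}\mid Y^n)+I(Y^n;M^{(n)}\mid X^n)$, where I use \eqref{eq:infoineq} for $(X^n,Y^n)$. I would then single-letterize: introduce a time-sharing index $T$ uniform on $\{1,\dots,n\}$ and independent of the inputs, set $X=X_T$, $Y=Y_T$, $Z=Z_T$, and take the single-letter transcript to be $M=(M^{(n)},X^{T-1},Y_{T+1}^{n},T)$, the genie-aided Wyner--Ziv auxiliary that folds the past $X$-coordinates and future $Y$-coordinates into the transcript. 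Using the chain rule $I(X^n;M^{(n)}\mid Y^n)=\sum_i I(X_i;M^{(n)}\mid Y^n,X^{i-1})$ and the fact that $X_i$ is independent of the other coordinates given $Y_i$, the $X$-cost single-letterizes cleanly; the $Y$-cost calls for the mirror-image auxiliary, and the mismatch between the two conditionings is exactly what the Csisz\'ar sum identity resolves, yielding $I(X^n;M^{(n)}\mid Y^n)+I(Y^n;M^{(n)}\mid X^n)\ge n\big[I(X;M\mid Y)+I(Y;M\mid X)\big]$. One checks that, conditioned on $T$ and the genie coordinates, this single-letter $M$ inherits \eqref{eq:Alice-Markov}--\eqref{eq:Bob-Markov}. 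Finally, since the block scheme only has vanishing error, I would apply Fano's inequality to upgrade the reconstruction guarantee to $H(Z\mid YM)=H(Z\mid XM)=0$ up to a term $n\epsilon_n$ with $\epsilon_n\to0$; dividing by $n$ and letting $n\to\infty$ gives $R\ge IC_{XY}(Z)$ for every achievable rate.

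\textbf{Achievability.} For the reverse inequality I would fix a single-letter $M=(M_1,M_2,\dots)$ attaining the infimum in \eqref{eq:optim} to within $\delta$ and satisfying $H(Z\mid YM)=H(Z\mid XM)=0$, and realize it over a block of length $n$ by an interactive Wyner--Ziv (source-coding-with-side-information) scheme applied round by round, following Kaspi. In an odd round $M_j$ depends only on $X$ and the past transcript by \eqref{eq:Alice-Markov}, so Alice, covering with an appropriate codebook and binning against Bob's side information $(Y^n,M^{j-1,n})$, lets Bob recover a sequence $\hat M_j^{n}$ jointly typical with $(X^n,Y^n,M^{j-1,n})$ at rate approaching $I(X;M_j\mid M^{j-1},Y)$; even rounds are symmetric. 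Summing the per-round rates, and using that a message from one party contributes nothing to the other party's cost, the total rate approaches $I(X;M\mid Y)+I(Y;M\mid X)$, while $H(Z\mid YM)=H(Z\mid XM)=0$ guarantees both parties compute $Z^n$ from their reconstructions with vanishing block error; letting $\delta\to0$ gives $R^*\le IC_{XY}(Z)$.

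\textbf{Main obstacle.} I expect the delicate step to be the single-letterization in the converse: unlike a one-shot transcript, the block message at round $j$ depends on \emph{all} coordinates, so producing a single-letter $M$ that genuinely respects the alternating structure \eqref{eq:Alice-Markov}--\eqref{eq:Bob-Markov} requires choosing precisely which past/future input coordinates to absorb into the auxiliary and verifying, via the Csisz\'ar sum identity, that the two cost terms single-letterize simultaneously rather than against incompatible auxiliaries. The matching subtlety on the achievability side is that the covering-and-binning must be nested across rounds, so each receiver conditions on its \emph{decoded} estimate $\hat M^{j-1,n}$ rather than the true transcript; controlling the accumulated decoding error and the rate penalties so that both remain vanishing as the number of rounds grows is the technical crux, and is exactly the content of Kaspi's interactive source-coding construction.
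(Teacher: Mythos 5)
The first thing to note is that the paper does not prove this theorem at all: it is stated as a known result, imported verbatim from Ma and Ishwar \cite{MaIs13} (see also \cite{el2011network}), so there is no in-paper proof to compare against. Your outline is essentially a reconstruction of the standard argument from those references --- Kaspi-style interactive Wyner--Ziv covering-and-binning for achievability, and a genie-aided time-sharing converse --- and at the level of a sketch it has the right architecture.

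Two caveats on the converse as you literally wrote it. First, the single auxiliary $M=(M^{(n)},X^{T-1},Y_{T+1}^{n},T)$ does single-letterize \emph{both rate terms} by pure ``conditioning reduces entropy'' (that genie is exactly Kaspi's choice; you do not even need the Csisz\'ar sum identity for the inequality direction), but it does \emph{not} deliver the computability constraints by the same device: Bob's block estimate is a function of $(Y^n,M^{(n)})$, and your single-letter conditioning $(Y_T,M)$ omits $Y^{T-1}$, while Alice's side $(X_T,M)$ omits $X_{T+1}^n$, so neither $H(Z|YM)=0$ nor $H(Z|XM)=0$ follows from Fano via a conditioning-superset argument. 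Similarly, verifying that the alternating chains \eqref{eq:Alice-Markov}--\eqref{eq:Bob-Markov} hold requires decomposing the auxiliary per round (with the genie coordinates absorbed into the first round) and invoking the interactive-protocol factorization property --- $p(m\,|\,x^n,y^n)$ splits as a product of a function of $(m,x^n)$ and a function of $(m,y^n)$ --- which is the actual content of the lemmas in \cite{MaIs13}, beyond what the Csisz\'ar sum identity alone resolves. Second, Fano's inequality only yields $\epsilon_n$-relaxed constraints, so concluding $R \ge IC_{XY}(Z)$ against the \emph{exact} zero-conditional-entropy constraints in the statement needs a closedness/continuity argument for the single-letter region at zero error, which your sketch mentions only in passing. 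You flag both issues in your ``main obstacle'' paragraph, so these are acknowledged gaps rather than errors, and they are filled in the cited works; as a blind reconstruction of a theorem the paper itself merely cites, your proposal is sound in outline.
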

\subsection{Lower bounding information complexity via Wyner common information}

Wyner tension, as defined in Section I can be written as:
\begin{equation}
 \Twyn(U;V)= \inf_{\substack{p_{Q|U,V}:\\U-Q-V}}[I(U;Q|V) + I(V;Q|U)]
 \label{eq: tension}
\end{equation}
Let $X$ and $Y$ be independent, from Theorem \ref{thm:tens-ic}, we can write
\begin{align*}
\IC{XY}{Z} &\ge \Twyn(XZ;YZ) + I(X;Z|Y) + I(Y;Z|X)
\end{align*}
Rewriting this is in a form suitable for our computation,
\begin{align}
\IC{XY}{Z} &\geq H(X|Y)+H(Y|X) \nonumber\\ & \quad - \sup_{\substack{p_{Q|U,V}:\\U-Q-V}} [H(U|Q)+H(V|Q)] \label{eq:ICcomp}
\end{align}
where $U=XZ$ and $V=YZ$. The problem now is to compute the supremum term in (\ref{eq:ICcomp}), where the auxiliary random variable $Q$ is such that given $Q$, the random variables $U$ and $V$ are independent. Given $Q=q$, for $U$ and $V$ to be conditionally independent, the edges in the characteristic graph should necessarily form a bipartite clique as shown in Fig \ref{fig:wyner1}. We first classify all the possible elements of $\mathcal{Q}$ into various classes, based on the characteristic graph formed by $U,V|Q=q$. We group all the elements with the same underlying bipartite clique into the same class. Now, since in a bipartite graph with a finite number of vertices in each vertex set, there are only finitely many bipartite cliques, we have a finite number of classes. Further, we combine several classes into one by looking only at maximal bipartite cliques, since a non maximal clique is just a special case of a maximal clique with some probability values being zero. Thus, the classes for a given $U,V$ distribution are those, each of which correspond to one maximal bipartite clique of the characteristic graph of $U,V$. Fig \ref{fig:Wyner-Tern} gives an example of such classes for a particular distribution. We can narrow down the search space of the alphabet $\mathcal{Q}_{opt}$ of an optimal auxiliary r.v ${Q}_{opt}$, which leads to the maximum value of $H(U|Q) + H(V|Q)$, with Theorem 3.

\begin{theorem} \label{th:main}

For a given $U,V$, to find the corresponding $\mathcal{Q}_{opt}$, it is sufficient to consider alphabets $\mathcal{Q}$ such that no two elements of $\mathcal{Q}$ are from the same class.

\end{theorem}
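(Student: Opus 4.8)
The plan is to recast the optimization in \eqref{eq:ICcomp} as a decomposition problem and then argue by a concavity/merging argument. First I would note that the constraint $U-Q-V$ is equivalent to writing
\[ p_{U,V} \;=\; \sum_{q} p_Q(q)\, p_{U|Q=q}\otimes p_{V|Q=q}, \]
a mixture of product distributions, and that the quantity to be maximized is $\sum_q p_Q(q)\,[\,H(U|Q{=}q)+H(V|Q{=}q)\,]$. Since the support of each product component is exactly a bipartite clique of the characteristic graph, and since padding a component with zero-probability entries turns a non-maximal clique into a sub-rectangle of a maximal one without changing the component or its contribution to the objective, I would first promote every component to a maximal clique. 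Thus each symbol $q$ carries a label -- the maximal clique (class) to which it belongs -- and the goal becomes to exhibit an optimal mixture that uses at most one product component per class.

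The engine is concavity of entropy. Fix an optimal mixture and a class $C=\mathcal{U}_C\times\mathcal{V}_C$ carrying components $(\alpha_i,\beta_i,p_i)_{i\in I}$ with total mass $w=\sum_i p_i$ and aggregate marginals $\bar\alpha=\tfrac1w\sum_i p_i\alpha_i$ and $\bar\beta=\tfrac1w\sum_i p_i\beta_i$. Replacing all of them by the single product $\bar\alpha\otimes\bar\beta$ of mass $w$ keeps the component conditionally independent (it is again a product supported in $C$), and by concavity of $H$,
\[ w\,[\,H(\bar\alpha)+H(\bar\beta)\,]\;\ge\;\sum_{i\in I} p_i\,[\,H(\alpha_i)+H(\beta_i)\,], \]
so the objective does not decrease. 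In the special case where the components of a class share a marginal -- say $\alpha_i\equiv\alpha$ -- one checks directly that $\sum_i p_i\,\alpha\otimes\beta_i=w\,\alpha\otimes\bar\beta$, so this replacement also leaves the joint $p_{U,V}$ unchanged, and iterating removes all multiplicities with no loss. This already settles every class that is a \emph{thin} clique ($1\times k$ or $k\times 1$); in particular it covers the ternary EQ example, all of whose maximal cliques are thin, which is where the bound is shown to be tight.

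The main obstacle is the general, \emph{fat}, clique, where a mixture of two products need not be a product, so the merge above changes the joint by
\[ \Delta \;=\; w\,\bar\alpha\otimes\bar\beta-\sum_{i\in I} p_i\,\alpha_i\otimes\beta_i, \]
which is supported on the rectangle $C$ and, crucially, has identically zero $U$- and $V$-marginals, yet is not itself zero. To keep $p_{U,V}$ exactly fixed one must cancel $\Delta$ by adjusting the product components sitting on the maximal cliques that overlap $C$. The heart of the argument is therefore to show that any such zero-marginal residual supported on a single maximal clique can be re-absorbed by the overlapping classes without a net decrease of the objective; I expect this re-routing step -- rather than the concavity inequality -- to be where the real work lies. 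Once it is in place, alternating the concavity merge with the re-absorption over the finitely many classes terminates with an optimal auxiliary variable having at most one symbol per class, which is the claim.
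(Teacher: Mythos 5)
You have correctly isolated the crux, but your proposal does not close it, and in fact the missing ``re-absorption'' step cannot be carried out in general. Your merge replaces the components $(\alpha_i\otimes\beta_i,p_i)$ of a class by $(\bar\alpha\otimes\bar\beta,w)$, which keeps the Markov chain $U-Q-V$ intact but perturbs the joint by a residual $\Delta$ that you then hope to cancel inside overlapping classes. Consider binary $U,V$ with
\[
p_{U,V}=\tfrac12\,\delta_0\otimes\left(\tfrac45,\tfrac15\right)+\tfrac12\,\delta_1\otimes\left(\tfrac15,\tfrac45\right),
\]
where $\delta_u$ denotes the point mass at $u$. This is a Markov-feasible decomposition with two symbols, both belonging to the unique class (the full clique $K_{2,2}$, since $p_{U,V}$ has full support). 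Your merge yields $\bar\alpha\otimes\bar\beta$ equal to the uniform distribution on the square, so $\Delta\neq 0$, and there is no other class into which it could be re-routed. Worse, no Markov-feasible $Q$ with at most one symbol per class exists here at all: one symbol in the single class means $|\mathcal{Q}|=1$, which forces $U$ and $V$ to be independent, contradicting their correlation. So on fat cliques no argument can simultaneously preserve $U-Q-V$, preserve $p_{U,V}$, and reduce to one symbol per class; your program is stuck exactly where you suspected, and not merely for lack of a clever lemma.

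The paper's proof takes the other horn of this dilemma. It merges $q_0,q_0'$ \emph{edge-wise}, setting $p_{Q_{new},U,V}(q_{new},u,v)=p_{Q,U,V}(q_0,u,v)+p_{Q,U,V}(q_0',u,v)$ (Figure \ref{fig:wyner2}), which preserves $p_{U,V}$ exactly; the objective then does not decrease by the log-sum inequality applied to the row and column sums, i.e.\ inequality (\ref{eq:tpt1}) --- your concavity inequality in joint rather than marginal form. The price is dual to yours: the merged conditional is a mixture of two products on the same rectangle, hence in general no longer a product, so $U-Q_{new}-V$ may fail and the reduction should be read in a relaxed sense --- the Markov-constrained supremum is upper-bounded by a one-symbol-per-class optimization in which the conditional given each $q$ need only be \emph{supported} on its clique. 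That relaxed form is all the paper's subsequent computations use: the upper bounds for ternary, 2-bit and $k$-ary EQ rely only on the clique supports and marginal constraints such as (\ref{eq:genCons}), never on conditional independence under $Q_{new}$, while the matching achievers are exhibited explicitly with product conditionals (a uniform distribution on a rectangle is a product). Your argument is complete, and coincides with the paper's merge, precisely on thin cliques, where any distribution on a $1\times l$ or $k\times 1$ rectangle is automatically a product; that covers the ternary EQ example, but not the 2-bit or general-$k$ applications of Theorem \ref{th:main}, where $2\times 2$ and larger fat cliques occur.
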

\begin{proof}
Consider Figure \ref{fig:wyner1}, which is a maximal bipartite clique of the characteristic graph of $U,V$, with left-degree $k$ and right-degree $l$. Now assume there are two elements of $\mathcal{Q}$, namely $q_0$ and $q'_0$ from the same class $q_1$, as shown in Figure \ref{fig:wyner2}. Each probability term $p_i$ refers to some $p_{QUV}(q,u,v)$, for example, in Figure \ref{fig:wyner2}, $p_2$ is the probability $p_{QUV}(q_0,1,2)$.

\begin{figure}[ht]
\begin{center}
\includegraphics[scale=0.25]{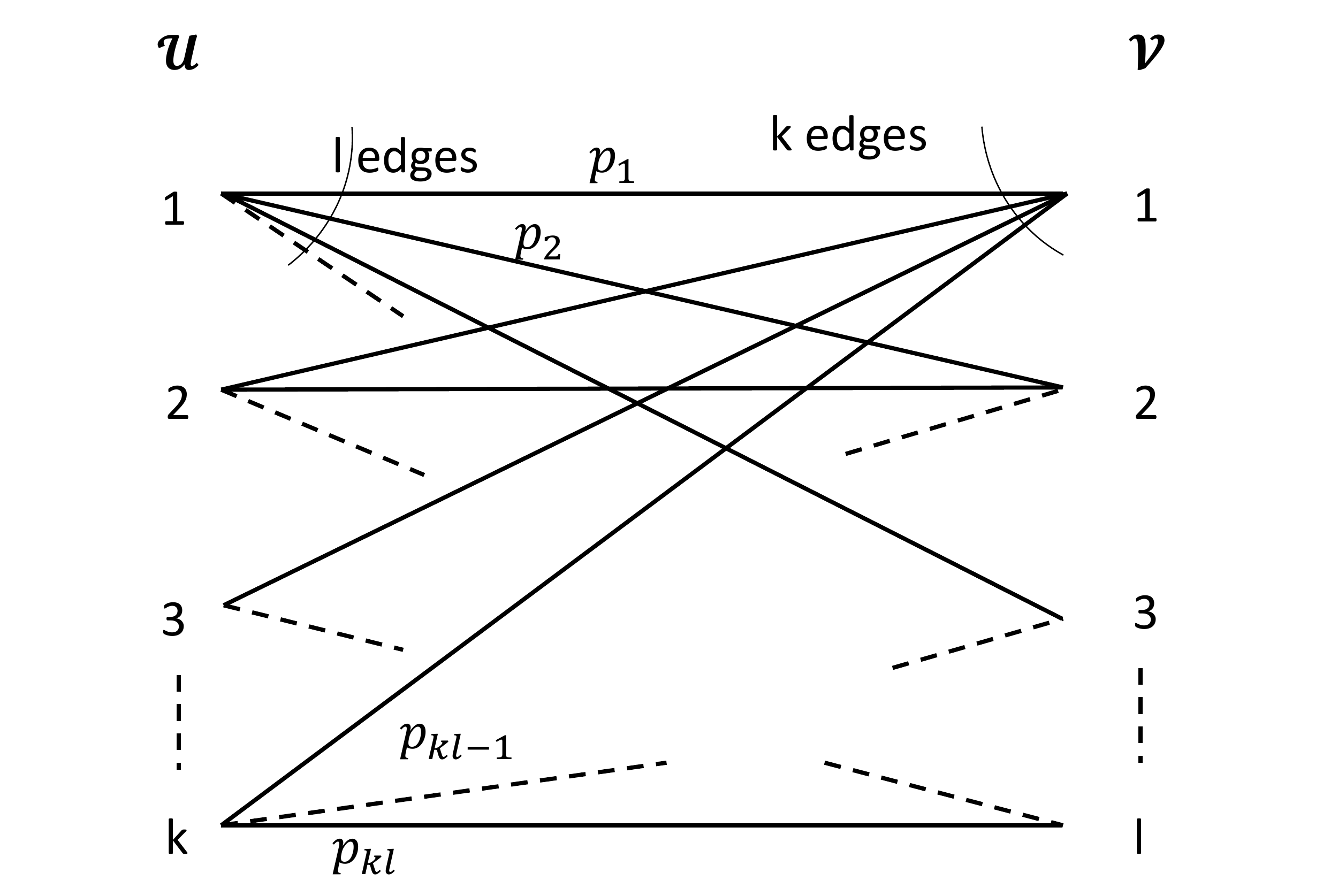}
\caption{Characteristic graph of $U,V|Q=q$ when $q$ is of class $q_1$}
\label{fig:wyner1}
\end{center}
\end{figure}
\begin{figure*}[ht]
\begin{center}
\includegraphics[scale=0.6]{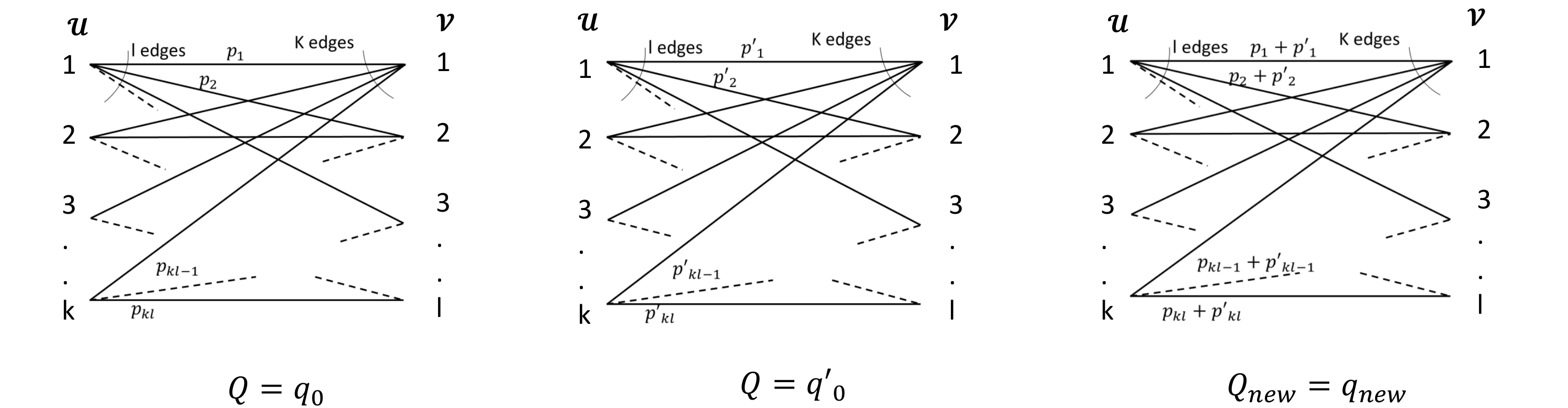}
\caption{If two elements of $\mathcal{Q}$ are from the same class, they can be merged to form a new $\mathcal{Q}_{new}$.}
\label{fig:wyner2}
\end{center}
\end{figure*}
For all random variables $Q$, whose alphabets have two elements from the same class, we can construct a ${Q}_{new}$ such that $H(U|Q) + H(V|Q) \leq H(U|Q_{new}) + H(V|Q_{new})$, by adding the weights of, and merging the corresponding edges of $q_0$ and $q_0'$, with all other elements remaining unchanged as shown in Figure \ref{fig:wyner2}. To prove this, we first prove that $H(U|Q)\leq H(U|Q_{new})$.\\
\begin{align}
H(U|Q) &= \sum_q p_Q(q)H(U|Q=q) \nonumber \\
&= \sum_q p_Q(q)H_{k}\Bigg(\frac{p_1+p_2+..+p_l}{p_Q(q)},\nonumber\\&\frac{p_{l+1}+..+p_{2l}}{p_Q(q)},...,\frac{p_{(k-1)l+1}+..+p_{kl}}{p_Q(q)} \Bigg)
\end{align}
\indent We now need to prove that
\begin{align}
p_Q(q_0)&H(U|Q=q_0)+p_Q(q_0')H(U|Q=q_0')\nonumber\\&\leq p_{Q_{new}}(q_{new})H(U|Q_{new}=q_{new}) \label{eq:tpt}
\end{align}
since the other terms in the summation are same for ${Q}$ and $Q_{new}$.
It is easy to see that $p_Q(q_0)=\sum_{i=1}^{kl} p_i$, and $p_Q(q_0')=\sum_{i=1}^{kl} p_i'$. Therefore $p_{Q_{new}}(q_{new})=\sum_{i=1}^{kl} (p_i+p_i')=p_Q(q_0)+p_Q(q_0')$. Hence, (\ref{eq:tpt}) is equivalent to
\begin{align}
(p_0&+p_0')H\left(\frac{a_1+a_1'}{p_0+p_0'},\frac{a_2+a_2'}{p_0+p_0'},...,\frac{a_k+a_k'}{p_0+p_0'} \right) \nonumber\\&\geq 
 p_0H\left(\frac{a_1}{p_0},\frac{a_2}{p_0},...,\frac{a_k}{p_0} \right) + p_0'H\left(\frac{a_1'}{p_0'},\frac{a_2'}{p_0'},...,\frac{a_k'}{p_0'} \right) \label{eq:tpt1}
\end{align}
where $a_i=p_{(i-1)l+1}+p_{(i-1)l+2}+...+p_{il}$, likewise $a_i'=p_{(i-1)l+1}'+p_{(i-1)l+2}'+...+p_{il}'$, and $p_0=p_Q(q_0)$, $p_0'=p_Q(q_0')$.\\
From the \textit{Log-Sum inequality}, $(a_i+a_i')\log {\frac{a_i+a_i'}{p_0+p_0'}}\leq a_i \log {\frac{a_i}{p_0}} + a_i ' \log {\frac{a_i'}{p_0'}}$, and so
\begin{align}
\sum_{i=1}^k \left[(a_i+a_i')\log {\frac{a_i+a_i'}{p_0+p_0'}}\right] & \leq \sum_{i=1}^k \left[ a_i \log {\frac{a_i}{p_0}} + a_i ' \log {\frac{a_i'}{p_0'}} \right] \nonumber 
\end{align}
and (\ref{eq:tpt1}) follows.\\
\indent Thus $H(U|Q)\leq H(U|Q_{new})$, and $H(V|Q)\leq H(V|Q_{new})$ can be proved with an equivalent argument.
\end{proof}

\section{{Lower Bounds on Information Complexity of EQ via Wyner Common Information}}\label{sec:examples}
We restrict our attention to inputs $X$ and $Y$ which are independent and uniformly distributed.

\subsection{Ternary EQ computation} 

Alice's and Bob's inputs, $X$ and $Y$ are independent and come uniformly from a distribution over ternary alphabets, say ${1,2,3}$. The function they want to compute is $Z=\mathbb{1}_{X=Y}$, the EQ function for a ternary alphabet. 

\begin{figure*}[ht]
 \begin{center}
  \includegraphics[scale = 0.6]{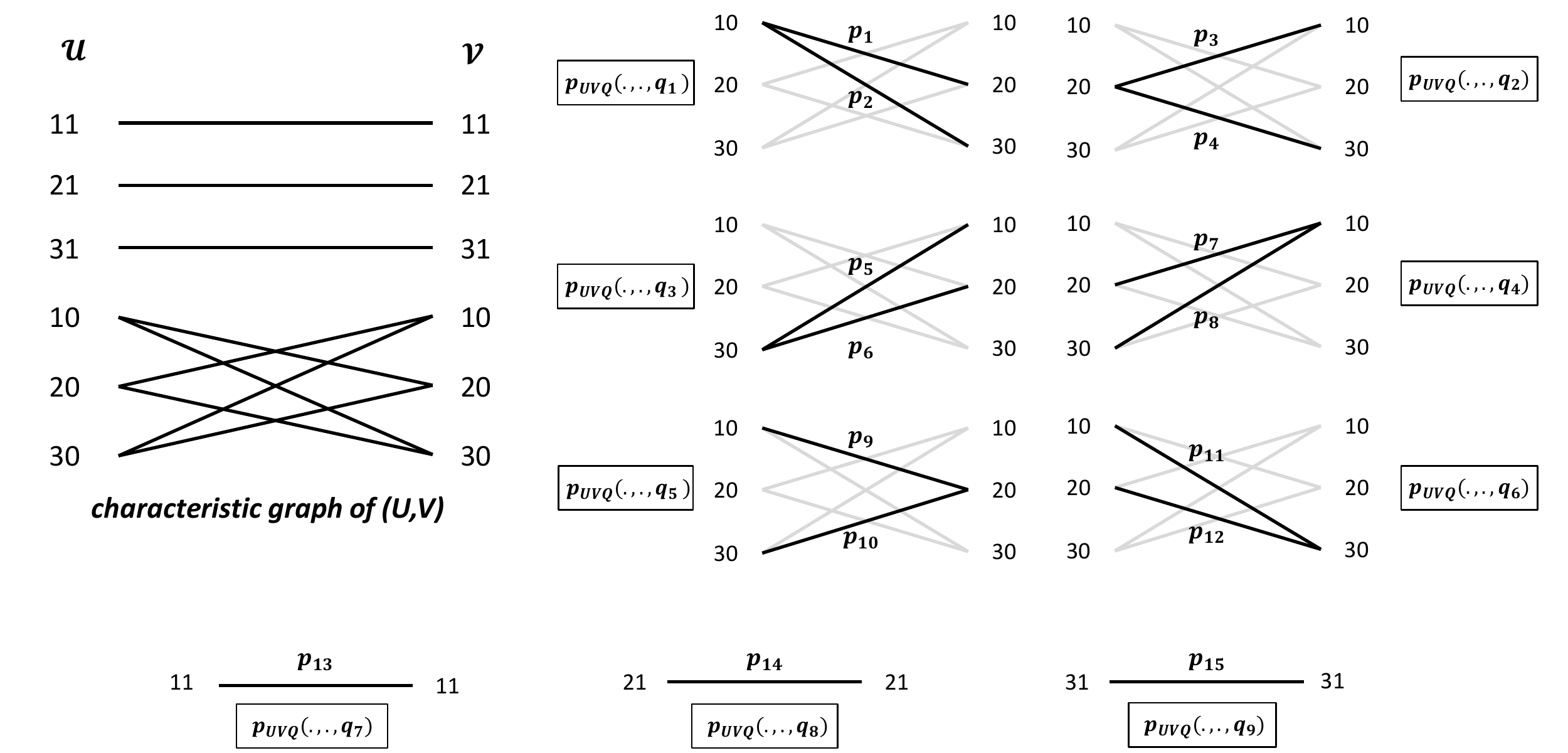} 
  \caption{Characteristic graph and Q classes for ternary EQ computation}
  \label{fig:Wyner-Tern}
 \end{center}
\end{figure*}

From Theorem \ref{th:main}, we can restrict the cardinality of $\mathcal{Q}$ to 9, where the different classes are shown in Figure \ref{fig:Wyner-Tern}.
From the uniform input distribution, we have $p_{XZ,YZ} = \frac{1}{9}$, and this leads to the constraints $\sum_q p_{XZ,YZ,Q}(u,v,q)=\frac{1}{9} \forall (u,v)$.\\
\begin{align}
p_i &= \frac{1}{9} \quad \forall i \in [13,15] \nonumber \label{eq:TernaryEQ}\\
p_{1} + p_{9} = \frac{1}{9}; \quad
p_{2} &+ p_{7} = \frac{1}{9}; \quad
p_{3} + p_{11} = \frac{1}{9};\nonumber  \\
p_{4} + p_{8} = \frac{1}{9}; \quad
p_{5} &+ p_{12} = \frac{1}{9}; \quad
p_{6} + p_{10} = \frac{1}{9}; 
\end{align}

\begin{align}
\text{Now, }\quad H(U|Q) + H(V|Q) &= \sum_{i=1}^{6}p_Q(q_i)H_2\left[\frac{p_{2i-1}}{p_Q(q_i)}\right] \nonumber \\
 &\leq \sum_{i=1}^{6}p_Q(q_i)
 =\frac{2}{3}
\end{align}
where we have used the fact that $H_2(\cdot)\leq 1$ and the set of equations in (\ref{eq:TernaryEQ}). So now from (\ref{eq:ICcomp}), we get $IC_{XY}(Z) \geq H(X)+H(Y)-\frac{2}{3} = 2\log 3 - \frac{2}{3} = 2.5033$. 

Consider the following protocol for the upper bound; in the \textit{amortized} case, this has to repeated over the block of inputs.\\
\noindent\makebox[\linewidth]{\rule{9cm}{0.5pt}}\label{Protocol:log3bitEQ}  
\textbf{Protocol 1: Ternary EQ computation}\\
1. Alice sends a symbol from a ternary alphabet indicating her input to Bob.\\
2. Bob locally computes $Z=\mathbb 1_{X=Y}$, and sends the resultant bit to Alice.\\
\noindent\makebox[\linewidth]{\rule{9cm}{0.5pt}}
\noindent The information cost for the above protocol is $H(X)+H(Z)=\log(3)+H_2(\frac{1}{3}) = 2.5033$. 
Thus we see that the lower bound developed is tight in this example. This protocol could be represented as ${Q}_{opt}$ in Figure \ref{fig:Wyner-Tern} as follows: $p_i=\frac{1}{9}\forall i \in [1,6],
p_i=0 \forall i \in [7,12],
p_i=\frac{1}{9} \forall i \in [13,15]$.

\subsection{Two bit EQ computation}
\label{2bitEQ_prob}
Alice and Bob communicate in order to compute the EQ function for two bits, $Z=\mathbb 1_{(X_0,X_1)=(Y_0,Y_1)}$, where all the bits are i.i.d $\mathcal{B}$($\frac{1}{2}$). 

From Theorem \ref{th:main}, it is sufficient if we look at $\mathcal{Q}$ S.T $|\mathcal{Q}|\leq 18$. The 18 classes in this case consists of 2 types of maximal bipartite cliques; one with 3 edges and the other with 4 edges each. A similar analysis along the lines elucidated in the case of ternary EQ would result in $H(U|Q)+H(V|Q) \leq 1.5$, and the upper bound is attained when the distribution on the 4 edge classes is uniform, i.e. the probability metric associated with each edge of a 4-edge class is same and equal to $\frac{1}{32}$. This implies that $\sup_{\substack{p_{Q|U,V}\\U-Q-V}}[H(U|Q)+H(V|Q)] = 1.5$, and hence $IC_{XY}(Z)\geq 2.5$.

We derive an upper bound on $IC_{XY}(Z)$ by giving a randomized protocol.\\
\noindent\makebox[\linewidth]{\rule{9cm}{0.5pt}}\label{Protocol:2bitEQ_rand} \\
\textit{Definitions:} Let Alice's input $X$ be uniform in $\mathcal{A}=\{1,2,3,4\}$, and Bob's input $Y$ uniform in $\mathcal{B}=\{1,2,3,4\}$. Define the sets $\mathbf{a}=\{1,2\},\mathbf{b}=\{1,3\},\mathbf{c}=\{1,4\},\mathbf{d}=\{2,3\},\mathbf{e}=\{2,4\},\mathbf{f}=\{3,4\}$.\\\\
\textbf{Protocol 2: Two bit EQ computation - Randomized}\\
{ 1. Alice uniformly picks $\mathbf{u}\in\{$a,b,c,d,e,f$\}$ such that $X\in \mathbf{u}$, and sends it to Bob.\\
2. If $Y\in \mathbf{u}$, Bob sends 1. Else he sends 1 or 0 with equal probability. \\\indent If Bob's message is 0, the protocol terminates and $Z=0$.\\\indent If it is 1, protocol proceeds to step 3.\\
3. Alice reveals her input.\\
4. Bob computes $Z$ and sends the result to Alice.\\}
\noindent\makebox[\linewidth]{\rule{9cm}{0.5pt}} \\
 \indent 
 If $X=Y$, which occurs with probability $\frac 1 4$, both parties learn 2 bits.
 If $X \neq Y$, but $Y \in \mathbf{u}$, which happens with probability $\frac{1}{4}$, then Bob sends $1$, and thus they proceed to step 3. If $Y \notin \mathbf{u}$, then Bob sends $1$ with probability $\frac{1}{2}$. So, given that Bob sends 1, Bob's input $Y \in \mathbf{u}$ with probability $\frac{1}{2}$. Hence if the protocol goes to step 3, Alice's uncertainty about Bob's input is $H_3(\frac{1}{2},\frac{1}{4},\frac{1}{4})=1.5$ at the end of the protocol. If it stops at step 2, Alice and Bob each would have learnt only 1 bit about each other. Therefore, the information cost is,
$\frac{1}{4}(4)+\frac{1}{4}(4-\frac{3}{2})+\frac{1}{4}(4-\frac{3}{2})+\frac{1}{4}(2) = 2.75$

\subsection{Wyner Tension for EQ with an arbitrary sized input alphabet}
{$X$ and $Y$ are uniformly and independently distributed from an alphabet} of cardinality $k$, and they want to compute $Z$, the EQ function which takes the value 1 when their inputs are equal. Now the maximal bipartite cliques in this new setting for the characteristic graph of $U=XZ$ and $V=YZ$ will be functions of $k$. 
For each $i$ {there will} be $^kC_i$ maximal bipartite cliques with $i$ nodes from the $U$ side, with $Z=0$ and $k-i$ nodes from the $V$ side, with $Z=0$. So the total number of classes would be $\sum_{i=1}^{k-1}$ $^kC_i + k$, where the final $k$ classes are for the $Z=1$ case, each containing one edge. For the $Z=0$ cliques, we refer to a maximal bipartite clique with $i$ nodes from the $U$ set as belonging to a class {$\mathcal{L}_i$}. Given some edge with $Z=0$, connecting $U=u$ and $V=v$: $(u,v)$, we can enumerate the number of {classes, $\mathcal{L}_i$} that contains the edge. Each edge $(u,v)$, with $u \neq v$, occurs in classes {$\mathcal{L}_1$} only once, in classes  {$\mathcal{L}_2$} $^{k-2}C_1$ times, and in general, occurs $^{k-2}C_{i-1}$ times in the classes {$\mathcal{L}_i$}.
\indent Now as in the earlier cases, each of the edges has a probability $p_{U,V,Q}(u,v,q)$ associated with it, which leads to a set of constraints: 
\begin{equation}
\label{eq:genCons}
 p_{U,V}(u,v) =\sum_{q}p_{U,V,Q}(u,v,q)= \frac{1}{k^2}\quad \forall (u,v) 
\end{equation}
In addition to these constraints the $p_{U,V,Q}(u,v,q)$ should be such that $U-Q-V$. Now,
\begin{align}
 &H(U|Q) + H(V|Q) \nonumber\\&= \sum_{q_{i}}\left[H(U|Q=q_i)p_Q(q_i) + H(V|Q=q_i)p_Q(q_i)\right] \nonumber\\ 
 &\leq \sum_{\mathcal{L}_1}p_Q(q_i)\log(k-1) + \sum_{\mathcal{L}_2}p_Q(q_i)(1 + \log(k-2)) \nonumber \\ 
& \quad + \dots + \sum_{\mathcal{L}_{k-1}}p_Q(q_i)\log(k-1) \label{eq:ICgen}
\end{align}

\textit{Case I: k is even}: Using the fact that if we have 2 non-negative integers $a$ and $b$ such that $a + b = k$ (a constant), the maximum value of $ab$ is when $a=b=\frac{k}{2}$, we get $(\log(i)+\log(k-i)) \leq (\log(\frac{k}{2}) + \log(\frac{k}{2}))$. Using this in (\ref{eq:ICgen}), we get
\begin{align}
 H&(U|Q) + H(V|Q) \nonumber\\&\leq \sum_{\mathcal{L}_1}p_Q(q_i)2\log\frac{k}{2} 
 +..+ \sum_{\mathcal{L}_{k-1}}p_Q(q_i)2\log\frac{k}{2} \nonumber 
 \end{align}
 \begin{align}
 =2\log(\frac{k}{2})\sum_{\mathcal{L}_1,..,\mathcal{L}_{k-1}}p_Q(q_i) =2(1-\frac{1}{k})\log(\frac{k}{2}) \label{eq:evenKTension} 
 \end{align}

Consider the distribution $p(u,v,q) = \frac{1}{k^2\left(^{k-2}C_{\frac{k-2}{2}}\right)}$ for all the edges in classes {$\mathcal{L}_{\frac{k}{2}}$}, and $p=0$ for all the other edges in the $Z=0$ set( Of course, for all the edges with $Z=1$, we need $p=\frac{1}{k^2}$ so as to satisfy the constraints in (\ref{eq:genCons})). It is easy to verify that this distribution ensures that $U-Q-V$, and hence is a valid $Q$ choice. For this distribution, the value of $H(U|Q) + H(V|Q)$ is,
\begin{equation*}
 ^kC_{\frac{k}{2}} \cdot \frac{\left(\frac{k}{2}\right)\left(\frac{k}{2}\right)}{k^2\left(^{k-2}C_{\frac{k-2}{2}}\right)}\cdot2\log\frac{k}{2} =\quad 2\left(1-\frac{1}{k}\right)\log\frac{k}{2}
\end{equation*}

and so, $\sup_{\substack{p_{Q|U,V}:\\U-Q-V}}H(U|Q) + H(V|Q) = 2\left(1-\frac{1}{k}\right)\log\frac{k}{2}$.\\
From (\ref{eq:ICcomp}), we get \\$IC_{XY}(Z) \geq 2\log(k) - 2\left(1-\frac{1}{k}\right)\log\frac{k}{2} = 2 + \frac{2}{k}\log\frac{k}{2}$.
\\
\textit{Case II: k is odd}: Like in the previous case, one can see that
\begin{equation*}
 H(U|Q) + H(V|Q)\leq \left[\log(\frac{k-1}{2}) + \log(\frac{k+1}{2})\right]\left(1-\frac{1}{k}\right)
\end{equation*}
Again, we can consider the distribution $p = \frac{1}{k^2\left(^{k-2}C_{\frac{k-1}{2}}\right)}$ for all the edges in classes {$\mathcal{L}_{\frac{k+1}{2}}$}, so that
\begin{align*}
 H(&U|Q) + H(V|Q) = \left[\log\left(\frac{k-1}{2}\cdot\frac{k+1}{2}\right)\right]\left(1-\frac{1}{k}\right)
\end{align*}
So
$\sup_{\substack{p_{Q|U,V}:\\U-Q-V}}[H(U|Q) + H(V|Q)] = \left[\log\left(\frac{k^2-1}{4}\right)\right]\left(1-\frac{1}{k}\right)$, and from  (\ref{eq:ICcomp}), we get \\
\\$IC_{XY}(Z) \geq 2\log(k) - \left[\log\left(\frac{k-1}{2}\cdot\frac{k+1}{2}\right)\right]\left(1-\frac{1}{k}\right)$.

\section{Conclusion}
{
In this paper we demonstrated a method for obtaining lower bounds on information complexity of functions under independent input distributions via computing Wyner common information. We showed the tightness of our lower bound for the ternary EQ function. For the 2-bit EQ function, our lower bound works out to 2.5, while we obtained an upper bound of 2.75 by giving a randomized protocol. 
  For the $k$-ary EQ function, our lower bound converges to 2 as $k \rightarrow \infty$. Repeated use of 2-bit EQ computation protocol gives an upper bound of 3.667 as $k \rightarrow \infty$.
}

\appendix[Proof of Theorem \ref{thm:tens-ic}]
\indent Consider the case when $X$ and $Y$ are independent. From (\ref{eq:infoineq}), $I(X;Y|M)\leq I(X;Y) =0$, and hence $I(X;Y|M)=0$. Using this, for any valid protocol with transcript $M$,
\begin{align*}
I(XZ;&YZ|M)=I(X;Y|M)+I(Z;Y|MX)\\& \quad \quad \quad  +I(X;Z|MY)+I(Z;Z|XYM)
\\&\leq 0 + H(Z|MX) +H(Z|MY) +H(Z|XYM)\\
&\overset{(a)}{=}0
\end{align*}
$(a)$ is because all the four terms are 0. Hence $I(XZ;YZ|M)=0$ and the Markov chain $XZ-M-YZ$.
\begin{align}
\text{Now, }I&(X;M|Y) + I(Y;M|X) \nonumber \\
&\overset{(a)}{=} I(X;MZ|Y) + I(Y;MZ|X) \nonumber \\ 
&\overset{(b)}{=} I(X;Z|Y) + I(Y;Z|X) \nonumber \\& \quad + I(XZ;M|YZ) + I(YZ;M|XZ)  \nonumber \\
&\overset{(c)}{\geq}I(X;Z|Y) + I(Y;Z|X) + \Twyn(XZ;YZ) \label{eq:HMTen}
\end{align}
\noindent where $(a)$ follows from the fact that $0 \leq I(X;Z|MY) \leq H(Z|MY) =0$, $(b)$ is true as $I(XZ;M|YZ)=I(X;M|YZ)+H(Z|MYZ)=I(X;M|YZ)$, $(c)$ is a result of the relaxation $XZ-M-YZ$. This implies that the information complexity of the setting $\IC{XY}{Z} \geq I(X;Z|Y) + I(Y;Z|X) + \Twyn(XZ;YZ)$, thus proving Theorem 1 for independent inputs.

\section*{Acknowledgments}

Vinod Prabhakaran would like to acknowledge useful discussions with Prakash Narayan and Shun Watanabe. The problem in Section \ref{2bitEQ_prob} is due to Shun Watanabe and was presented in~\cite{NarayanITW15}.

The authors would like to thank the Visiting Students' Research Programme (VSRP) of Tata Institute of Fundamental Research (TIFR), Mumbai for facilitating Shijin Rajakrishnan and Sundara Rajan's summer internship at TIFR. Vinod Prabhakaran's research was funded in part by a Ramanujan Fellowship from the Department of Science \& Technology, Government of India.


\begin{thebibliography}{10}
\providecommand{\url}[1]{#1}
\csname url@samestyle\endcsname
\providecommand{\newblock}{\relax}
\providecommand{\bibinfo}[2]{#2}
\providecommand{\BIBentrySTDinterwordspacing}{\spaceskip=0pt\relax}
\providecommand{\BIBentryALTinterwordstretchfactor}{4}
\providecommand{\BIBentryALTinterwordspacing}{\spaceskip=\fontdimen2\font plus
\BIBentryALTinterwordstretchfactor\fontdimen3\font minus
  \fontdimen4\font\relax}
\providecommand{\BIBforeignlanguage}[2]{{%
\expandafter\ifx\csname l@#1\endcsname\relax
\typeout{** WARNING: IEEEtran.bst: No hyphenation pattern has been}%
\typeout{** loaded for the language `#1'. Using the pattern for}%
\typeout{** the default language instead.}%
\else
\language=\csname l@#1\endcsname
\fi
#2}}
\providecommand{\BIBdecl}{\relax}
\BIBdecl

\bibitem{Yao79}
A.~C.-C. Yao, ``Some complexity questions related to distributive computing
  (preliminary report),'' in \emph{STOC}, 1979, pp. 209--213.

\bibitem{KushilevitzNi97}
E.~Kushilevitz and N.~Nisan, ``Communication complexity,'' \emph{Cambridge},
  1997.

\bibitem{Kaspi85}
A.~H. Kaspi, ``Two-way source coding with a fidelity criterion,''
  \emph{Information Theory, IEEE Transactions on}, vol.~31, no.~6, pp.
  735--740, 1985.

\bibitem{MaIs13}
N.~Ma and P.~Ishwar, ``The infinite-message limit of two-terminal interactive
  source coding,'' \emph{Information Theory, IEEE Transactions on}, vol.~59,
  no.~7, pp. 4071--4094, 2013.

\bibitem{BravermanSc15}
M.~Braverman and J.~Schneider, ``Information complexity is computable,''
  \emph{arXiv:1502.02971}, 2015.

\bibitem{PrabhakaranPr14arxiv}
M.~M. Prabhakaran and V.~M. Prabhakaran, ``Tension bounds for information
  complexity,'' \emph{arXiv:1408.6285}, 2014.
  
  
\bibitem{PrabhakaranPr14}
V.~M. Prabhakaran and M.~M. Prabhakaran, ``Assisted common information with an
  application to secure two-party sampling,'' \emph{IEEE Transactions on
  Information Theory}, vol.~60, no.~6, pp. 3413--3434, 2014.

\bibitem{GacsKo73}
P.~G{\'a}cs and J.~K{\"o}rner, ``Common information is far less than mutual
  information,'' \emph{Problems of Control and Information Theory}, vol.~2,
  no.~2, pp. 149--162, 1973.

\bibitem{Wyner75}
A.~D. Wyner, ``The common information of two dependent random variables,''
  \emph{IEEE Transactions on Information Theory}, vol.~21, no.~2, pp. 163--179,
  1975.

\bibitem{AhlswedeCs13}
R.~Ahlswede and I.~Csisz{\'a}r, ``On oblivious transfer capacity,'' in
  \emph{Information Theory, Combinatorics, and Search Theory}.\hskip 1em plus
  0.5em minus 0.4em\relax Springer, 2013, pp. 145--166.

\bibitem{RaoPr14}
K.~S. Rao and V.~M. Prabhakaran, ``A new upperbound for the oblivious transfer
  capacity of discrete memoryless channels,'' in \emph{Information Theory
  Workshop (ITW), 2014 IEEE}.\hskip 1em plus 0.5em minus 0.4em\relax IEEE,
  2014, pp. 35--39.

\bibitem{Witsenhausen76}
H.~S.~Witsenhausen, ``Values and bounds for the common information of two discrete random variables,'' {\em SIAM Journal on Applied Mathematics}, vol.~31, no.~2, pp.~313--333, 1976.

\bibitem{el2011network}
A.~El Gamal and Y.-H.~Kim, ``Network Information Theory,'' \emph{Cambridge university press}, 2011 

\bibitem{braverman2012interactive}
{M.~Braverman, ``Interactive information complexity,'' in \emph{Proceedings of the forty-fourth annual ACM symposium on Theory of computing}, pp. 505-524, 2012.}

\bibitem{NarayanITW15}
P.~Narayan, ``Interactive multi-terminal communication,'' in \emph{Information Theory Workshop (ITW)},  Jerusalem, 2015.

\end{thebibliography}
\end{document}